\documentclass{article}

\usepackage{graphicx}
\usepackage[utf8]{inputenc}
\usepackage[english]{babel}

\usepackage{amsthm}
\usepackage{amsmath}
\usepackage{amssymb}
\usepackage{amsfonts}
\usepackage{hyperref}

\def\R{\mathbb{R}}
\def\N{\mathbb{N}}

\newtheorem{thm}{Theorem}
\newtheorem{lem}{Lemma}
\newtheorem{cor}{Corollary}



\title{Quantum codes do not increase fidelity against isotropic errors}

\author{J. Lacalle\thanks{Dep. de Matem\' atica Aplicada a las TIC, ETS de Ingenier\' ia de Sistemas Inform\' aticos, Universidad Polit\' ecnica de Madrid, C/ Alan Turing s/n, 28031, Madrid, Spain (jesus.glopezdelacalle@upm.es).}, L.M. Pozo-Coronado\thanks{Dep. de Matem\' atica Aplicada a las TIC, ETS de Ingenier\' ia de Sistemas Inform\' aticos, Universidad Polit\' ecnica de Madrid, C/ Alan Turing s/n, 28031, Madrid, Spain (lm.pozo@upm.es).},\\ A.L. Fonseca de Oliveira\thanks{Facultad de Ingenier\' ia, Universidad ORT Uruguay, Montevideo, Uruguay (fonseca@ort.edu.uy).}, R. Mart\' in-Cuevas\thanks{Programa de Doctorado en Ciencias y Tecnolog\' ias de la Computaci\' on para Smart Cities, ETS de Ingenier\' ia de Sistemas Inform\' aticos, Universidad Polit\' ecnica de Madrid, C/ Alan Turing s/n, 28031, Madrid, Spain (r.martin-cuevas@alumnos.upm.es).}}

\date{January 2022}

\begin{document}

\maketitle

\begin{abstract}
Given an $m-$qubit $\Phi_0$ and an $(n,m)-$quantum code $\mathcal{C}$, let $\Phi$ be the $n-$qubit that results from the $\mathcal{C}-$encoding of $\Phi_0$. Suppose that the state $\Phi$ is affected by an isotropic error (decoherence), becoming $\Psi$, and that the corrector circuit of $\mathcal{C}$ is applied to $\Psi$, obtaining the quantum state $\tilde\Phi$. Alternatively, we analyze the effect of the isotropic error without using the quantum code $\mathcal{C}$. In this case the error transforms $\Phi_0$ into $\Psi_0$. Assuming that the correction circuit does not introduce new errors and that it does not increase the execution time, we compare the fidelity of $\Psi$, $\tilde\Phi$ and $\Psi_0$ with the aim of analyzing the power of quantum codes to control isotropic errors. We prove that $F(\Psi_0) \geq F(\tilde\Phi) \geq F(\Psi)$. Therefore the best option to optimize fidelity against isotropic errors is not to use quantum codes.\newline
\end{abstract}

\noindent{\it Keywords}: quantum error correcting codes, isotropic quantum computing errors, quantum computing error fidelity, quantum computing error variance

\section{Introduction}

Currently the biggest obstacle to the development of quantum computing continues
to be control of quantum errors. Since the beginnings of quantum computing
in the 90s of the last century one of the main research objectives was to
solve this stumbling block. To address the problem, two fundamental tools were
developed: quantum error correction codes~\cite{CS,St1,Go1,CRSS1,Go2,CRSS2} in combination with
fault tolerant quantum computing~\cite{Sh,St2,Pr1,Go3,KLZ,Ki,AB}. These studies culminated
in the proof of the quantum threshold theorem, which reads as follows:
a quantum computer with a physical error rate below a certain threshold can,
through application of quantum error correction schemes, suppress the logical
error rate to arbitrarily low levels. However, the proof of this theorem depends
on the discretized treatment of quantum errors, inherited from the construction
of quantum codes.

We believe that the quantum error model used for the proof of the quantum
threshold theorem is not general and that the techniques developed to control
quantum errors do not verify the golden rule of error control: correct all small
errors exactly. For example, in the case of the coding of a qubit by means of
the 5-qubit code~\cite{BDSW,LMPZ} it is argued, using error discretization and that this
code exactly corrects errors in any of the qubits, that the error probability goes
from $p$ to $p^2$, once the correction circuit has been applied. But, what is actually
happening is that the probability of an error (small with high probability) in all
qubits is 1 and that the code cannot correct these simultaneous errors. Then,
an error occurs with probability 1 and, once the correction circuit is applied, it
becomes undetectable.

Therefore it is necessary to make an analysis of quantum errors regardless
of their discretization. The procedure indicated for this is to consider quantum
errors as continuous random variables and characterize them by their corresponding
density functions. In this article we analyze a specific type of error:
isotropic quantum errors. An isotropic error of an $n-$qubit $\Phi$ is one in which
the probability of the error $\Psi$ only depends on the distance between the two
states, $\|\Psi-\Phi\|$, and not on the direction in which the imprecision $\Psi$ occurs
with respect to $\Phi$. They are errors that are easy to analyze due to their central
symmetry with respect to $\Phi$.

In work~\cite{LPF} we have studied the ability of an arbitrary quantum code to correct
these errors, using the variance as the error measure. If $\Phi$ is the $n-$qubit
without error state, $\Psi$ the state resulting from a disturbance modeled by an
isotropic quantum error and $\tilde\Phi$ the result of applying the quantum code correction
circuit, assuming that it does not introduce new errors, the result that we
demonstrate in~\cite{LPF} is the following:
$$
V(\tilde\Phi)\geq V(\Psi),
$$
where $V(\tilde\Phi)=E[\|\tilde\Phi-\Phi\|^2]$ and $V(\Psi)=E[\|\Psi-\Phi\|^2]$ are the variances of the
corrected state $\tilde\Phi$ and the disturbed state $\Psi$ respectively. This means that no
quantum code can handle isotropic errors, or even reduce their variance.

Now we are interested in analyzing the ability of quantum codes to increase
fidelity against isotropic errors, since the fidelity allows to measure the quantum
errors taking into account that the quantum states do not change if they are
multiplied by a phase factor, while the variance used in~\cite{LPF} does not take this
fact into account.

We represent $n-$qubits as points of the unit real sphere of dimension $2d-1$ being $d=2^n$~\cite{NC}, $S^{2d-1}=\{x\in\R^{2d}\ |\ \|x\| =1\}$, taking coordinates with respect to the computational basis $[|0\rangle,|1\rangle,\dots,|2^n-1\rangle]$,
\begin{equation}
\label{For:QubitFormula}
\Psi=(x_0+ix_1,x_2+ix_3,\dots,x_{2d-2}+ix_{2d-1}).
\end{equation}

We consider quantum computing errors as random variables with density
function defined on $S^{2d-1}$. In~\cite{LPF} it is mentioned that it is easy to relate this
representation to the usual representation in quantum computing by density
matrices and that the representation through random variables is more accurate.

We define the variance of a random variable $X$ as the mean of the quadratic
deviation from the mean value $\mu$ of $X$, $V(X)=E[\|X-\mu\|^2]$. In our case, since
the random variable $X$ represents a quantum computing error, the mean value
of $X$ is the $n-$qubit $\Phi$ resulting from an errorless computation. Without loss
of generality, we will assume that the mean value of every quantum computing
error will always be $\Phi = |0\rangle$. To achieve this, it suffices to move $\Phi$ into $|0\rangle$
through a unitary transformation. Therefore, using the pure quantum states
given by Formula (\ref{For:QubitFormula}), the variance of $X$ will be:
\begin{equation}
\label{For:DefVariance}
V(X)=E\left[[\|\Psi-\Phi\|^2\right]=E[2-2x_0]=2-2\int_{S^{2d-1}}x_0f(x)dx.
\end{equation}

Obviously the variance satisfies $V(X)\in[0,4]$. In~\cite{LP} the variance of the sum of two independent errors on $S^{2d-1}$ is presented for the first time. It is proved for isotropic errors and it is conjectured in general that:
\begin{equation}
\label{For:VarFormula}
V(X_1+X_2)=V(X_1)+V(X_2)-\frac{V(X_1)V(X_2)}{2}.
\end{equation}

Considering the representation of errors through random variables, the definition of fidelity is very simple:
\begin{equation}
\label{For:DefFidelity}
F^2(X)=E\left[|\langle\Psi|\Phi\rangle|^2\right]=E\left[x_0^2+x_1^2\right]=\int_{S^{2d-1}}(x_0^2+x_1^2)f(x)dx.
\end{equation}

Then, the problem we want to address is the following: Let $\Phi_0$ be an
$m-$qubit and $\Phi$ the corresponding $n-$qubit encoded by an $(n,m)-$quantum
code $\mathcal C$. Suppose that the coded state $\Phi$ is changed by error, becoming the state $\Psi$.
Now, to fix the error we apply the code correction circuit, obtaining the final
state $\tilde\Phi$. While $\Phi$ is a pure state, $\Psi$ and $\tilde\Phi$ are random variables (mixed states).

We also want to study the possibility of not using quantum codes. In this
case, we suppose that the initial state $\Phi_0$ is changed by error, becoming the
state $\Psi_0$. State $\Psi_0$ is also a random variable. Then our goal is to compare the
fidelities of $\Psi$, $\tilde\Phi$ and $\Psi_0$:
$$
F(\Psi)=E\left[|\langle\Psi|\Phi\rangle|^2\right],\  F(\tilde\Phi)=E\left[|\langle\tilde\Phi|\Phi\rangle|^2\right] \ \text{and}\ F(\Psi_0)=E\left[|\langle\Psi_0|\Phi_0\rangle|^2\right].
$$

In order to compare the fidelities we will assume that the corrector circuit
of $\mathcal C$ does not introduce new errors and it does not increase the execution time.
In other words, we are going to estimate the theoretical capacity of the code to
correct quantum computing errors.

In the case of isotropic errors we shall prove that:
\begin{equation}
\label{For:Result}
F(\Psi_0) \geq F(\tilde\Phi) \geq F(\Psi).
\end{equation}

This result leads us to the conclusion that the best option to optimize fidelity
against isotropic errors is not to use quantum codes. This result goes in the same
direction as that obtained in~\cite{LPF}, which indicates that quantum codes do not
reduce the variance against isotropic errors.

However, the most widely used model of errors in quantum computing is
qubit independent errors. The study of this type of quantum error is much
more complex than that of isotropic errors, because it does not have the same
symmetry. Despite this technical difficulty, we have proved in~\cite{LPFM} that the
$5-$qubit code~\cite{BDSW,LMPZ} is not able to reduce the variance against qubit independent
errors. This result, together with those obtained in~\cite{LPF} and in this article,
clearly reveals the difficulty of the quantum error control challenge and clearly
shows that the continuous nature of quantum errors cannot be ignored.

There are many works related to the control of quantum computing errors, in
addition to those already mentioned above. General studies and surveys on the
subject~\cite{Sc,Pr2,Go4,CDT,HFWH,DP,HZKBR,HKR}, about the quantum computation threshold
theorem~\cite{AGP,WFSH,ACGW,CT}, quantum error correction codes~\cite{OG,LZLX,CCHF,Gu},
concatenated quantum error correction codes~\cite{BPFHC,ES} and articles related to
topological quantum codes~\cite{DPB,NJS}. Lately, quantum computing error control
has focused on both coherent errors~\cite{GM,BEKP} and cross-talk errors~\cite{PSVW,BTS}. Finally,
we cannot forget the hardest error to control in quantum computing, the
quantum decoherence~\cite{Zu}. As we have commented above, these quantum computing
errors can be analyzed in the framework of random variables that has
been set in~\cite{LPF,LP}. In the conclusions we analyze in more detail the characteristics
of the different types of error from the point of view of their control and
in view of the result obtained in this paper.

The outline of the article is as follows: in section 2 we study the fidelity of
the quantum stages $\Psi$, $\Psi_0$ and $\tilde\Phi$; in section 3 we prove the relationship between
them given by Formula (\ref{For:Result}); finally, in section 4 we analyze the conclusions that
can be obtained from the proved result.

\section{Analysis of fidelity}

Associated with the $(n,m)-$quantum code $\mathcal{C}$, the following parameters are defined: $d=2^n$ is the dimension of $\mathcal{C}$, $d^{\,\prime} = 2^m$ and $d^{\,\prime\prime}$ is the number of discrete errors that $\mathcal{C}$ corrects.

First we are going to study how we can compare the fidelity of the quantum states $\Psi$ and $\tilde\Phi$, which are $n-$qubits encoded with the quantum code $\mathcal{C}$, and the fidelity of the state $\Psi_0$, which is an unencoded $m-$qubit state. The working scheme in these two scenarios is illustrated in Figure \ref{Fig:SchemesEncoded-NonEncoded}. We assume that the $\mathcal{C}$ correction circuit, which is applied after each quantum gate in the coded algorithm, does not introduce new errors and is ideally applied for a time $t=0$. In this way we study the theoretical capacity of $\mathcal{C}$ to control isotropic errors, that is, its capacity to increase the fidelity of the final state $\tilde\Phi$ with respect to $\Psi$, and we can compare it with the fidelity of the final state $\Psi_0$ in the scheme without the quantum code  $\mathcal{C}$.

\begin{figure}[h]
\label{Fig:SchemesEncoded-NonEncoded}\quad
\begin{center}
        \includegraphics[scale=0.4]{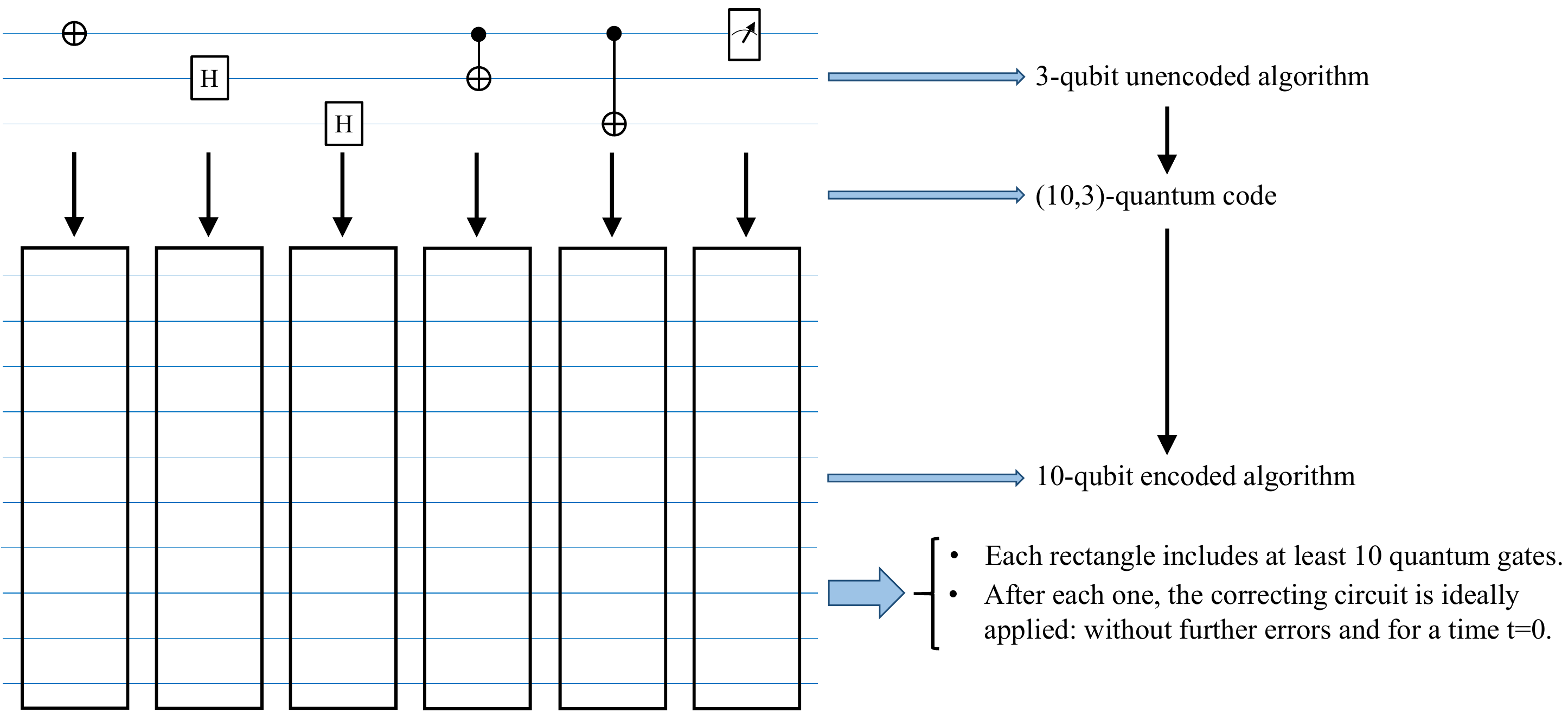}
        \caption{\centerline{Uncoded/coded work scheme.}}
\end{center}
\end{figure}

We analyze the isotropic error as a decoherence error over a unit of time, which corresponds to the time it takes to apply a quantum gate in the coded algorithm. To compare it with the uncoded algorithm we have to bear in mind that the unit of time in this case will be at most the $n-$th part of the unit of time in the coded algorithm. To relate the probability distributions in both cases we use the following equality of variances:
$$
V(E)=V(E_1+E_2+\cdots+E_n),
$$
where $E$ is the decoherence error during a unit of time in the coded algorithm and $E_1$, $E_2$, \dots $E_n$ are independent decoherence errors corresponding to a unit of time in the uncoded algorithm. Using the following generalization of Formula (\ref{For:VarFormula}) demonstrated in~\cite{LP}:
\begin{equation}
\label{For:GeneralVarFormula}
V(E_1+E_2+\cdots+E_n)=2-2\left(1-\frac{v_u}{2}\right)^n,
\end{equation}
where $v_u$ is the variance of each of the independent errors, we obtain the following relation of $v_u$ with the variance $v_c$ of the error $E$:
\begin{equation}
\label{For:RelationshipVariances}
v_c=2-2\left(1-\frac{v_u}{2}\right)^n \quad \Leftrightarrow \quad v_u=2-2\left(\frac{2-v_c}{2}\right)^{1/n}.
\end{equation}

In the case of the normal probability distribution defined in~\cite{LPF,LP}, with the following density function:
\begin{equation}
\label{For:NormalDistribution}
f_n(\sigma,\theta_0)=\frac{(2d-2)!!}{(2\pi)^d}\frac{(1-\sigma^2)}{(1+\sigma^2-2\sigma\cos(\theta_0))^d},
\end{equation}
where the parameter $\sigma$ belongs to the interval $[0,1)$, the above variances have a very simple expression and are independent of the dimension: $v_c=2(1-\sigma_c)$ and $v_u=2(1-\sigma_u)$. The relationship between them given in Formula (\ref{For:RelationshipVariances}) translates into a very simple relationship between the corresponding sigma parameters:
\begin{equation}
\label{For:RelationshipSigmaParameters}
\sigma_c=\sigma_u^n \quad \Rightarrow \quad \sigma_u=\sigma_c^{1/n}.
\end{equation}

From now on we are going to follow the scheme proposed in~\cite{LPF} to calculate the variances of states $\Psi$ and $\tilde\Phi$, but to calculate the fidelities of these states and of state $\Psi_0$.

\subsection{Fidelity of $\Psi$ and $\Psi_0$}

The state $\Psi$, described in Cartesian coordinates in Formula (\ref{For:QubitFormula}) is represented in spherical coordinates as follows:
$$
\begin{array}{l}
\Psi = (\theta_0,\,\theta_1,\,\dots,\,\theta_{2d-2})\quad
\left\{\begin{array}{l}
\vrule height 8pt depth 8pt width 0pt 0\leq\theta_0,\,\dots,\,\theta_{2d-3}\leq\pi \\
\vrule height 12pt depth 2pt width 0pt 0\leq\theta_{2d-2}\leq 2\pi
\end{array}\right., \\
\vrule height 16pt depth 8pt width 0pt x_j=\sin(\theta_0)\,\cdots\,\sin(\theta_{j-1})\,\cos(\theta_j)\quad\text{for all}\quad 0\leq j\leq 2d-2, \\
\vrule height 12pt depth 8pt width 0pt x_{2d-1}=\sin(\theta_0)\,\cdots\,\sin(\theta_{2d-2}).
\end{array}
$$

Using this representation of $\Psi$, the fidelity entered in Formula (\ref{For:DefFidelity}) is as follows:
\begin{equation}
\label{For:DefFidelitySpherical}
F^2(X)=E\left[\cos^2(\theta_0)+\sin^2(\theta_0)\cos^2(\theta_1)\right]=1-E\left[\sin^2(\theta_0)\sin^2(\theta_1)\right].
\end{equation}

\begin{thm}
\label{Thm:FidelityPsi}
The fidelity of the isotropic random variable $\Psi$ with density function $f(\theta_0)$ is equal to:
\begin{equation}
\label{For:FidelityPsi}
F^2(\Psi) = 1 - 4 \, \dfrac{(2\pi)^{d-1}}{(2d-1)!!} \, (d-1) \, {\bar E}\left[\sin^{2d}(\theta_0)\right],
\end{equation}
where $\displaystyle {\bar E}\left[\sin^{2d}(\theta_0)\right]=\int_0^\pi f(\theta_0)\sin^{2d}(\theta_0)d\theta_0$.
\end{thm}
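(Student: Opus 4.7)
The plan is to start from the simplified spherical-coordinate expression (\ref{For:DefFidelitySpherical}), namely
$$F^2(\Psi) = 1 - E\!\left[\sin^2(\theta_0)\sin^2(\theta_1)\right],$$
and evaluate this single expectation by brute integration on $S^{2d-1}$. The standard spherical volume element is
$$d\omega = \sin^{2d-2}(\theta_0)\,\sin^{2d-3}(\theta_1)\,\cdots\,\sin(\theta_{2d-3})\,d\theta_0\,d\theta_1\,\cdots\,d\theta_{2d-2},$$
and after absorbing the factor $\sin^2(\theta_0)\sin^2(\theta_1)$ coming from the integrand we obtain the joint weight $\sin^{2d}(\theta_0)\,\sin^{2d-1}(\theta_1)\,\sin^{2d-4}(\theta_2)\cdots\sin(\theta_{2d-3})$. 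Isotropy is the key hypothesis: the density $f$ depends only on $\theta_0$, so the full integral factors into three independent one-dimensional or lower-dimensional pieces.

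I would then identify the three factors separately. The $\theta_0$ factor is, by definition, $\bar E[\sin^{2d}(\theta_0)]$. The $\theta_1$ factor is $\int_0^\pi \sin^{2d-1}(\theta_1)\,d\theta_1 = 2(2d-2)!!/(2d-1)!!$, a classical Wallis-type integral. The remaining factor, over $(\theta_2,\dots,\theta_{2d-2})$, is precisely the surface-area integral of $S^{2d-3}\subset\R^{2d-2}$ written in its own spherical coordinates, and therefore equals $\mathrm{Vol}(S^{2d-3}) = 2\pi^{d-1}/(d-2)!$. Multiplying these two numerical constants together, and using $(2d-2)!! = 2^{d-1}(d-1)!$ to cancel the $(d-2)!$ in the denominator, the product collapses to
$$\frac{2(2d-2)!!}{(2d-1)!!}\cdot\frac{2\pi^{d-1}}{(d-2)!}=\frac{4(d-1)(2\pi)^{d-1}}{(2d-1)!!},$$
which matches exactly the coefficient in (\ref{For:FidelityPsi}).

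The main obstacle is essentially bookkeeping: one has to be careful that the spherical volume element is correctly quoted, that the residual angular integral is recognized as the total surface area of $S^{2d-3}$ (not $S^{2d-2}$), and that the double factorials, powers of two and powers of $\pi$ recombine cleanly. Conceptually there is no obstruction at all; the whole calculation is driven by the isotropy hypothesis, which decouples $\theta_0$ from all other angles and turns a $(2d-1)$-dimensional integral into a product of one-variable integrals. I would also check the small cases $d=2$ (a single encoded qubit) to confirm that the formula behaves correctly and agrees with the definitions introduced in Section 2.
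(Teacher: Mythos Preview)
Your proposal is correct and follows essentially the same approach as the paper: both start from $F^2(\Psi)=1-E[\sin^2(\theta_0)\sin^2(\theta_1)]$, use isotropy to factor the integral into $\bar E[\sin^{2d}(\theta_0)]$, the Wallis integral $\int_0^\pi\sin^{2d-1}(\theta_1)\,d\theta_1$, and the surface area of $S^{2d-3}$, and then simplify the constants. The only cosmetic difference is that the paper writes $|S^{2d-3}|=(2\pi)^{d-1}/(2d-4)!!$ while you use the equivalent form $2\pi^{d-1}/(d-2)!$.
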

\begin{proof}
We have to calculate the expected value of an expression that depends only on angles $\theta_0$ and $\theta_1$ and the isotropic density function depends only on angle $\theta_0$. Therefore, using Formula (\ref{For:DefFidelitySpherical}):
$$
\begin{array}{lll}
\vrule height 14pt depth 12pt width 0pt F^2(\Psi) & = & \displaystyle 1 - |S^{2d-3}| \, {\bar E}[\sin^{2d}(\theta_0)] \, \int_0^\pi \sin^{2d-1}(\theta_1)d\theta_1 \\
\vrule height 20pt depth 14pt width 0pt           & = & \displaystyle 1 - \dfrac{(2\pi)^{d-1}}{(2d-4)!!} \, 2\dfrac{(2d-2)!!}{(2d-1)!!} \, {\bar E}[\sin^{2d}(\theta_0)] \\
\vrule height 18pt depth 14pt width 0pt           & = & \displaystyle 1 - 4 \, \dfrac{(2\pi)^{d-1}}{(2d-1)!!} \, (d-1) \, {\bar E}[\sin^{2d}(\theta_0)]. \\
\end{array}
$$
We have used equalities from the Appendix.
\end{proof}

\begin{cor}
\label{Cor:FidelityPsiND}
The fidelity of the isotropic random variable $\Psi$ with normal distribution $f_n(\sigma_c,\theta_0)$ is equal to:
\begin{equation}
\label{For:FidelityPsiND}
F^2(\Psi) = \dfrac{1+(d-1)\sigma_c^2}{d}.
\end{equation}
\end{cor}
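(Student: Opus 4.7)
The plan is to plug the normal-distribution density $f_n(\sigma_c,\theta_0)$ from Formula~(\ref{For:NormalDistribution}) into the statement of Theorem~\ref{Thm:FidelityPsi}, so that the whole computation is reduced to evaluating the single one-variable integral
\[
J(\sigma_c) := \int_0^\pi \frac{\sin^{2d}(\theta_0)}{(1+\sigma_c^2-2\sigma_c\cos\theta_0)^d}\,d\theta_0.
\]
Concretely, I would write $\bar E[\sin^{2d}(\theta_0)] = \dfrac{(2d-2)!!\,(1-\sigma_c^2)}{(2\pi)^d}\,J(\sigma_c)$ and substitute the result into (\ref{For:FidelityPsi}).

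The key (and only nontrivial) step is the claim that $J(\sigma_c)$ is actually independent of $\sigma_c$. For this I would expand the kernel via the Gegenbauer generating function
\[
(1-2\sigma t+\sigma^2)^{-d}\;=\;\sum_{n=0}^{\infty} C_n^d(t)\,\sigma^n,\qquad |\sigma|<1,
\]
and integrate term by term (uniform convergence for $\sigma_c$ bounded away from $1$). Under the change of variable $t=\cos\theta_0$, the Gegenbauer weight $(1-t^2)^{d-1/2}$ coincides with $\sin^{2d-1}(\theta_0)$, so that
\[
\int_0^\pi C_n^d(\cos\theta_0)\sin^{2d}(\theta_0)\,d\theta_0 \;=\; \int_{-1}^{1} C_n^d(t)(1-t^2)^{d-1/2}\,dt,
\]
which vanishes for every $n\ge 1$ by orthogonality against $C_0^d\equiv 1$. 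Only the $n=0$ term therefore contributes, and Wallis' formula gives $J(\sigma_c)=\int_0^\pi \sin^{2d}\theta\,d\theta=\pi(2d-1)!!/(2d)!!$.

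What remains is purely arithmetic. Substituting $J(\sigma_c)$ into $\bar E[\sin^{2d}(\theta_0)]$ and then into (\ref{For:FidelityPsi}), the factor $(2\pi)^{d-1}/(2d-1)!!$ cancels against its reciprocal coming from Theorem~\ref{Thm:FidelityPsi}, and the elementary identity $(2d-2)!!/(2d)!!=1/(2d)$ collapses the remaining factorials, yielding
\[
F^2(\Psi)=1-\frac{(d-1)(1-\sigma_c^2)}{d}=\frac{1+(d-1)\sigma_c^2}{d}.
\]
The main obstacle is the recognition that $J(\sigma_c)$ is $\sigma_c$-independent; once this is in hand via the Gegenbauer orthogonality trick, every other step is mechanical bookkeeping with double factorials.
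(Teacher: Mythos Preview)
Your proof is correct and follows essentially the same route as the paper: apply Theorem~\ref{Thm:FidelityPsi}, insert the normal density~(\ref{For:NormalDistribution}), reduce everything to the single integral $J(\sigma_c)=\int_0^\pi \sin^{2d}(\theta_0)\,(1+\sigma_c^2-2\sigma_c\cos\theta_0)^{-d}\,d\theta_0$, and simplify the double factorials. The only difference is that the paper simply quotes the value $J(\sigma_c)=\pi\,(2d-1)!!/(2d)!!$ from its Appendix, whereas you supply an independent derivation of that identity via the Gegenbauer generating function and orthogonality---a pleasant bonus, but not a genuinely different strategy.
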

\begin{proof}
Using the definition of the normal distribution given in Formula (\ref{For:NormalDistribution}) and the Appendix:
$$
\begin{array}{lll}
\vrule height 14pt depth 12pt width 0pt F^2(\Psi) & = & \displaystyle 1 - 4 \, \dfrac{(2\pi)^{d-1}}{(2d-1)!!} \, (d-1) \, {\bar E}[\sin^{2d}(\theta_0)] \\
\vrule height 20pt depth 14pt width 0pt           & = & \displaystyle 1 - 4 \, \dfrac{(2\pi)^{d-1}}{(2d-1)!!} \, (d-1) \, \dfrac{(2d-2)!!}{(2\pi)^d} \, (1-\sigma_c^2) \, \dfrac{(2d-1)!!}{(2d)!!}\pi \\
\vrule height 18pt depth 14pt width 0pt           & = & \displaystyle 1 - \dfrac{d-1}{d} \, (1-\sigma_c^2) = \dfrac{1+(d-1)\sigma_c^2}{d}.  \\
\end{array}
$$
\end{proof}

Theorem \ref{Thm:FidelityPsi} and Corollary \ref{Cor:FidelityPsiND} also apply to state $\Psi_0$, changing the parameter $d$ to $d^\prime$.

\begin{cor}
\label{Cor:FidelityPsi_0}
The fidelity of the isotropic random variable $\Psi_0$ with density function $f(\theta_0)$ is equal to:
\begin{equation}
\label{For:FidelityPsi_0}
F^2(\Psi_0) = 1 - 4 \, \dfrac{(2\pi)^{d^\prime-1}}{(2d^\prime-1)!!} \, (d^\prime-1) \, {\bar E}\left[\sin^{2d^\prime}(\theta_0)\right],
\end{equation}
where $\displaystyle {\bar E}\left[\sin^{2d^\prime}(\theta_0)\right]=\int_0^\pi f(\theta_0)\sin^{2d^\prime}(\theta_0)d\theta_0$. And, if the probability distribution of $\Psi_0$ is normal with density function $f_n(\sigma_u,\theta_0)$, the fidelity is equal to:
\begin{equation}
\label{For:FidelityPsi_0ND}
F^2(\Psi_0) = \dfrac{1+(d^\prime -1)\sigma_u^2}{d^\prime}.
\end{equation}
\end{cor}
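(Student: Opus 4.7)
The plan is to observe that $\Psi_0$ plays, in the uncoded scenario, exactly the same role that $\Psi$ plays in the coded scenario: it is an isotropic random variable on the unit sphere $S^{2d'-1}\subset\R^{2d'}$ (since an $m$-qubit lies in a space of complex dimension $d'=2^m$), centered at the errorless state $\Phi_0$, which we may assume to be $|0\rangle$ after a unitary change of coordinates. Therefore nothing in the derivation of Theorem \ref{Thm:FidelityPsi} or Corollary \ref{Cor:FidelityPsiND} is intrinsic to the encoded dimension $d$; every appearance of $d$ came from the dimension of the ambient sphere and from the integrals over that sphere.

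With that observation in hand, the first part of the corollary follows by repeating the proof of Theorem \ref{Thm:FidelityPsi} verbatim, writing $\Psi_0$ in spherical coordinates on $S^{2d'-1}$, expressing the fidelity via Formula (\ref{For:DefFidelitySpherical}) as $1-E[\sin^2(\theta_0)\sin^2(\theta_1)]$, and separating out the $\theta_0$-dependent density from the integral $\int_0^\pi \sin^{2d'-1}(\theta_1)\,d\theta_1$ and the surface-measure factor $|S^{2d'-3}|$, with every occurrence of $d$ replaced by $d'$. This yields precisely Formula (\ref{For:FidelityPsi_0}).

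For the second part, I would plug the normal density $f_n(\sigma_u,\theta_0)$ into the formula just obtained and mirror the calculation of Corollary \ref{Cor:FidelityPsiND}, again substituting $d\mapsto d'$ and $\sigma_c\mapsto\sigma_u$; the same Appendix identities used there reduce $\bar E[\sin^{2d'}(\theta_0)]$ to a closed form, and the factors collapse to $\frac{1+(d'-1)\sigma_u^2}{d'}$. The only conceptual point to make explicit is that the parameter $\sigma_u$ appearing here is the one associated with the uncoded unit of time through Formula (\ref{For:RelationshipSigmaParameters}); no new calculation is required once this identification is made.

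There is essentially no technical obstacle: the content of the corollary is the recognition that the derivation of Theorem \ref{Thm:FidelityPsi} and Corollary \ref{Cor:FidelityPsiND} is dimension-agnostic, so the same formulas hold on the smaller sphere $S^{2d'-1}$. The only thing to be careful about is to keep the two variance parameters $\sigma_c$ and $\sigma_u$ (equivalently $v_c$ and $v_u$) distinct, since they refer to decoherence errors over different units of physical time.
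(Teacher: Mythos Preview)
Your proposal is correct and matches the paper's own treatment exactly: the paper does not even give a separate proof for this corollary, but simply remarks that ``Theorem \ref{Thm:FidelityPsi} and Corollary \ref{Cor:FidelityPsiND} also apply to state $\Psi_0$, changing the parameter $d$ to $d^\prime$.'' Your write-up makes this dimension-agnostic substitution explicit and is, if anything, more detailed than the paper itself.
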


To compare the fidelities of $\Psi_0$ and $\tilde\Phi$ we need to obtain their values as a function of their variances $v_u$ and $v_c$ respectively. The relationship between these variances obtained in Formula (\ref{For:RelationshipVariances}) will allow us to relate the fidelities of these states.

\begin{thm}
\label{Thm:LowerBoundFidelityPsi_0}
The fidelity of the isotropic random variable $\Psi_0$ with density function $f(\theta_0)$ satisfy:
\begin{equation}
\label{For:LowerBoundFidelityPsi_0}
F^2(\Psi_0) \geq 1 - \dfrac{2d^\prime-2}{2d^\prime-1}\left(v_u-\left(\dfrac{v_u}{2}\right)^2\right).
\end{equation}
\end{thm}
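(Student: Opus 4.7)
The plan is to convert the exact formula for $F^2(\Psi_0)$ furnished by Corollary~\ref{Cor:FidelityPsi_0} into an honest spherical expectation of $\sin^2(\theta_0)$, which can then be bounded in terms of $v_u$ by Jensen's inequality applied to the first Cartesian coordinate $x_0=\cos(\theta_0)$.

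First, I will simplify the prefactor $4(2\pi)^{d'-1}(d'-1)/(2d'-1)!!$ appearing in Corollary~\ref{Cor:FidelityPsi_0}. Using $(2d'-1)!!=(2d'-1)(2d'-3)!!$ together with the identity $|S^{2d'-2}|=2(2\pi)^{d'-1}/(2d'-3)!!$ (available from the Appendix), this prefactor equals $\tfrac{2d'-2}{2d'-1}\,|S^{2d'-2}|$. Since the polar-angle Jacobian on $S^{2d'-1}$ is $\sin^{2d'-2}(\theta_0)$ and the remaining angular integrals contribute exactly $|S^{2d'-2}|$, we recognize $|S^{2d'-2}|\,\bar{E}[\sin^{2d'}(\theta_0)] = E[\sin^2(\theta_0)]$, yielding
\[
F^2(\Psi_0) \;=\; 1 \;-\; \frac{2d'-2}{2d'-1}\,E[\sin^2(\theta_0)].
\]

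Second, I will bound $E[\sin^2(\theta_0)] = 1 - E[\cos^2(\theta_0)]$ from above. By Jensen's inequality applied to the convex map $t \mapsto t^2$, $E[\cos^2(\theta_0)] \geq (E[\cos(\theta_0)])^2$. Reading Formula~(\ref{For:DefVariance}) in spherical coordinates gives $v_u = 2 - 2E[\cos(\theta_0)]$, so $E[\cos(\theta_0)] = 1 - v_u/2$ and therefore
\[
E[\sin^2(\theta_0)] \;\leq\; 1 - \left(1-\frac{v_u}{2}\right)^2 \;=\; v_u - \left(\frac{v_u}{2}\right)^2.
\]
Plugging this into the displayed identity for $F^2(\Psi_0)$ produces the desired inequality~(\ref{For:LowerBoundFidelityPsi_0}).

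The substantive ingredient is the single use of Jensen on the first Cartesian coordinate; the rest is bookkeeping to convert the auxiliary $\bar{E}[\,\cdot\,]$ notation into the genuine spherical expectation. I do not anticipate a real obstacle, only the need to carefully match the factorial coefficients to $|S^{2d'-2}|$.
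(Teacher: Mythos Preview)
Your proposal is correct and follows essentially the same route as the paper: both first rewrite the exact fidelity formula as $F^2(\Psi_0)=1-\tfrac{2d'-2}{2d'-1}\,E[\sin^2(\theta_0)]$ by matching the factorial prefactor to $|S^{2d'-2}|$, and then bound $E[\sin^2(\theta_0)]$ via Jensen's inequality for $t\mapsto t^2$. The only cosmetic difference is that the paper applies Jensen to $1-\cos(\theta_0)$ whereas you apply it directly to $\cos(\theta_0)$; the resulting bound $E[\sin^2(\theta_0)]\le v_u-(v_u/2)^2$ is identical.
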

\begin{proof}
First we prove, similar to the proof of Theorem \ref{Thm:FidelityPsi}, the following:
$$
\begin{array}{lll}
\vrule height 14pt depth 12pt width 0pt F^2(\Psi_0) & = & \displaystyle 1 - |S^{2d^\prime-3}|\, {\bar E}[\sin^{2d^\prime}(\theta_0)] \int_0^\pi\sin^{2d^\prime-1}(\theta_1)d\theta_1 \\
\vrule height 20pt depth 14pt width 0pt             & = & \displaystyle 1 - |S^{2d^\prime-3}|\, {\bar E}[\sin^{2d^\prime}(\theta_0)] \int_0^\pi\sin^{2d^\prime-3}(\theta_1)d\theta_1\, \dfrac{\displaystyle \int_0^\pi\sin^{2d^\prime-1}(\theta_1)d\theta_1}{\displaystyle \int_0^\pi\sin^{2d^\prime-3}(\theta_1)d\theta_1} \\
\vrule height 18pt depth 14pt width 0pt             & = & \displaystyle 1 - \int_{S^{2d^\prime-1}}f(\theta_0)\sin^2(\theta_0)\, \dfrac{\displaystyle \int_0^\pi\sin^{2d^\prime-1}(\theta_1)d\theta_1}{\displaystyle \int_0^\pi\sin^{2d^\prime-3}(\theta_1)d\theta_1} \\
\end{array}
$$

And, using the formulas in the Appendix, we obtain:
$$
F^2(\Psi_0) = \displaystyle 1 - E[\sin^2(\theta_0)]\, \dfrac{2d^\prime-2}{2d^\prime-1}.
$$
\end{proof}

Using Jensen's inequality we obtain a lower bound for $E[\sin^2(\theta_0)]$:
$$
\begin{array}{lll}
\vrule height 14pt depth 12pt width 0pt (E[1-\cos(\theta_0)])^2 & \leq & E[(1-\cos(\theta_0))^2] \\
\vrule height 14pt depth 12pt width 0pt                         & =    & E[1+\cos^2(\theta_0)-2\cos(\theta_0)] \\
\vrule height 14pt depth 12pt width 0pt                         & =    & E[2-2\cos(\theta_0)-\sin^2(\theta_0)] \\
\vrule height 14pt depth 12pt width 0pt                         & =    & v_u - E[\sin^2(\theta_0)]. \\
\end{array}
$$

And then:
$$
E[\sin^2(\theta_0)] \leq v_u - (E[1-\cos(\theta_0)])^2 = v_u-\left(\dfrac{v_u}{2}\right)^2.
$$

Substituting in the formula of $F^2(\Psi_0)$ the previous lower bound of $E[\sin^2(\theta_0)]$, the proof is concluded:
$$
F^2(\Psi_0) \geq 1 - \dfrac{2d^\prime-2}{2d^\prime-1}\left(v_u-\left(\dfrac{v_u}{2}\right)^2\right).
$$

\subsection{Fidelity of $\tilde\Phi$}

The formula for the fidelity of the state $\tilde\Phi$ is very similar to that of the state $\Psi$, Formula (\ref{For:FidelityPsi}), although the proof is more complex because the quantum code $\mathcal{C}$ is involved.

\begin{thm}
\label{Thm:FidelityTildePhi}
The fidelity of the isotropic random variable $\tilde\Phi$ with density function $f(\theta_0)$ is equal to:
\begin{equation}
\label{For:FidelityTildePhi}
F^2(\tilde\Phi) = 1 - 4 \, \dfrac{(2\pi)^{d-1}}{(2d-1)!!} \, (d-d^{\prime\prime}) \, {\bar E}\left[\sin^{2d}(\theta_0)\right],
\end{equation}
where $\displaystyle {\bar E}\left[\sin^{2d}(\theta_0)\right]=\int_0^\pi f(\theta_0)\sin^{2d}(\theta_0)d\theta_0$.
\end{thm}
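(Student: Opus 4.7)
The plan is to express $F^2(\tilde\Phi)$ as a sum indexed by the $d^{\prime\prime}$ correctable errors and then use the spherical isotropy of $\Psi$ to reduce that sum to a linear function of $F^2(\Psi)$, for which Theorem~\ref{Thm:FidelityPsi} already gives a closed form.

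First I would model the corrector algebraically. Let $\{E_0=I,E_1,\ldots,E_{d^{\prime\prime}-1}\}$ denote the discrete errors corrected by $\mathcal{C}$, so that the subspaces $E_i\mathcal{C}\subset\C^d$ are mutually orthogonal of complex dimension $d^\prime$; let $P_i$ be the orthogonal projector onto $E_i\mathcal{C}$. After applying the corrector the syndrome $i$ is measured with probability $\|P_i\Psi\|^2$ and the corrected state is $\tilde\Phi=E_i^{-1}P_i\Psi/\|P_i\Psi\|$. Averaging $|\langle\tilde\Phi|\Phi\rangle|^2$ first over the measurement outcome, and using the identity $\langle E_i^{-1}P_i\Psi\,|\,\Phi\rangle=\langle P_i\Psi\,|\,E_i\Phi\rangle=\langle\Psi\,|\,E_i\Phi\rangle$ (valid because $E_i\Phi\in E_i\mathcal{C}$ is fixed by $P_i$), the probabilities $\|P_i\Psi\|^2$ cancel and one arrives at
$$
F^2(\tilde\Phi)\;=\;\sum_{i=0}^{d^{\prime\prime}-1} E\!\left[|\langle\Psi|\Phi_i\rangle|^2\right],\qquad \Phi_i:=E_i\Phi,
$$
where $\{\Phi_i\}_{i=0}^{d^{\prime\prime}-1}$ is an orthonormal family in $\C^d$ with $\Phi_0=\Phi$.

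Next I would exploit isotropy. Since $f$ depends only on $\|\Psi-\Phi\|$, the distribution of $\Psi$ is invariant under the $O(2d-1)$ subgroup of real orthogonal transformations fixing $\Phi$, which acts transitively on the real unit sphere inside the real orthogonal complement of $\Phi$. Combined with $|\langle\Psi|v\rangle|^2=\langle\Psi,v\rangle_\R^2+\langle\Psi,iv\rangle_\R^2$, this yields a common value $E[|\langle\Psi|v\rangle|^2]=c$ for every complex unit vector $v$ orthogonal to $\Phi$. Applying Parseval's identity in expectation to an orthonormal basis $\{\Phi,e_1,\ldots,e_{d-1}\}$ of $\C^d$ gives $F^2(\Psi)+(d-1)c=E[\|\Psi\|^2]=1$, hence $c=(1-F^2(\Psi))/(d-1)$, and therefore
$$
F^2(\tilde\Phi)\;=\;F^2(\Psi)+(d^{\prime\prime}-1)\,\dfrac{1-F^2(\Psi)}{d-1}.
$$

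Finally I would substitute the closed form
$$
1-F^2(\Psi)\;=\;\dfrac{4\,(2\pi)^{d-1}\,(d-1)}{(2d-1)!!}\,\bar E[\sin^{2d}(\theta_0)]
$$
supplied by Theorem~\ref{Thm:FidelityPsi}; the factor $d-1$ cancels in the second term and a short rearrangement produces exactly the asserted formula. The main obstacle I expect is justifying the decomposition $F^2(\tilde\Phi)=\sum_i E[|\langle\Psi|\Phi_i\rangle|^2]$ when $d>d^\prime d^{\prime\prime}$, i.e.\ when there are non-correctable syndromes: the identity $\langle E_i^{-1}P_i\Psi|\Phi\rangle=\langle\Psi|E_i\Phi\rangle$ silently discards the contribution of the orthogonal complement of $\bigoplus_i E_i\mathcal{C}$, and this must be interpreted, consistently with the paper's assumption that the corrector introduces no new errors, as a best-case bound on the theoretical capacity of $\mathcal{C}$.
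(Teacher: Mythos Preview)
Your proof is correct and shares its skeleton with the paper's: both decompose $F^2(\tilde\Phi)$ as $F^2(\Psi)$ plus $(d^{\prime\prime}-1)$ identical syndrome contributions, and then recover the theorem by inserting the closed form from Theorem~\ref{Thm:FidelityPsi}. Where you diverge is in evaluating that common contribution $c=E[|\langle\Psi|\Phi_1\rangle|^2]$. The paper computes it by direct spherical--coordinate integration, working out $E[\sin^2\theta_0\cdots\sin^2\theta_{2d'-1}]$ and $E[\sin^2\theta_0\cdots\sin^2\theta_{2d'+1}]$ separately from the Appendix integrals and subtracting. You instead observe that isotropy makes $E[|\langle\Psi|v\rangle|^2]$ constant over all complex unit vectors $v\perp\Phi$, so complex Parseval forces $F^2(\Psi)+(d-1)c=1$ and hence $c=(1-F^2(\Psi))/(d-1)$. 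Your route is shorter, coordinate-free, and exposes the mechanism cleanly: correction recovers exactly the fraction $(d^{\prime\prime}-1)/(d-1)$ of the lost fidelity, a pure dimension count. The paper's route, on the other hand, ties the syndrome term directly to $\bar E[\sin^{2d}\theta_0]$ without first invoking Theorem~\ref{Thm:FidelityPsi}. Your caveat about $d>d^\prime d^{\prime\prime}$ is well placed and agrees with the paper's implicit convention: the component orthogonal to $\bigoplus_i E_i\mathcal{C}$ simply does not contribute, consistently with the standing assumption that the corrector introduces no new errors.
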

\begin{proof}
Taking into account Theorem 3 and Corollary 1 of \cite{LPF} the fidelity of $\tilde\Phi$ is the following:
$$
F^2(\tilde\Phi) = E\left[P_0|\langle \Phi| \Pi_0\Psi \rangle|^2\right] + (d^{\prime\prime}-1) \, E\left[P_1|\langle E_1\Phi| \Pi_1\Psi \rangle|^2\right].
$$
where $P_0$ and $P_1$ are the probabilities of measuring the syndromes $0$ and $1$ respectively, $\Pi_0$ and $\Pi_1$ the (normalized) projectors corresponding to the discrete errors $E_0=I$ and $E_1$ associated with the aforementioned syndromes and $E_1\Phi=E_1|0\rangle=|2d^\prime\rangle$.

The first expected value in the above expression is equal to $F^2(\Psi)$ by the Formula (\ref{For:DefFidelitySpherical}) and, using Theorem \ref{Thm:FidelityPsi}, it is obtained:
$$
\begin{array}{lll}
\vrule height 14pt depth 8pt width 0pt E\left[P_0|\langle \Phi| \Pi_0\Psi \rangle|^2\right] & = & E\left[1-\sin^2(\theta_0)\sin^2(\theta_1)\right] \\
\vrule height 16pt depth 12pt width 0pt & = & 1 - 4 \, \dfrac{(2\pi)^{d-1}}{(2d-1)!!} \, (d-1) \, {\bar E}\left[\sin^{2d}(\theta_0)\right]. \\
\end{array}
$$

And the second is the following:
$$
E\left[P_1|\langle E_1\Phi| \Pi_1\Psi \rangle|^2\right] =  E\left[\sin^2(\theta_0) \cdots \sin^2(\theta_{2d^\prime-1})\left(1-\sin^2(\theta_{2d^\prime})\sin^2(\theta_{2d^\prime+1})\right)\right]. \\
$$

Using the Appendix the following is obtained:
$$
\begin{array}{l}
\vrule height 14pt depth 8pt width 0pt \displaystyle E\left[\sin^2(\theta_0)\dots\sin^2(\theta_{2d^\prime-1})\right] = {\bar E}[\sin^{2d}(\theta_0)] \\
\vrule height 20pt depth 14pt width 0pt \displaystyle \cdot \int_0^\pi \sin^{2d-1}(\theta_0) d\theta_1 \cdots \int_0^\pi \sin^{2d-2d^\prime+1}(\theta_{2d^\prime-1}) d\theta_{2d^\prime-1} |S_{2d-2d^\prime-1}| \\
\vrule height 20pt depth 14pt width 0pt \displaystyle = {\bar E}[\sin^{2d}(\theta_0)] \, 2\dfrac{(2d-2)!!}{(2d-1)!!} \, \pi\dfrac{(2d-3)!!}{(2d-2)!!} \cdots 2\dfrac{(2d-2d^\prime)!!}{(2d-2d^\prime+1)!!} \, \dfrac{(2\pi)^{d-d^\prime}}{(2d-2d^\prime-2)!!} \\
\vrule height 20pt depth 14pt width 0pt \displaystyle = {\bar E}[\sin^{2d}(\theta_0)] \, 4 \dfrac{(2\pi)^{d-1}}{(2d-1)!!} \, (d-d^\prime).
\end{array}
$$

Similarly we obtain:
$$
E\left[\sin^2(\theta_0)\dots\sin^2(\theta_{2d^\prime+1})\right] = {\bar E}[\sin^{2d}(\theta_0)] \, 4 \dfrac{(2\pi)^{d-1}}{(2d-1)!!} \, (d-d^\prime-1).
$$

With the last two results the following is obtained:
$$
E\left[P_1|\langle E_1\Phi| \Pi_1\Psi \rangle|^2\right] =  {\bar E}[\sin^{2d}(\theta_0)] \, 4 \dfrac{(2\pi)^{d-1}}{(2d-1)!!}. \\
$$

Finally we get the result we are looking for:
$$
\begin{array}{lll}
\vrule height 14pt depth 8pt width 0pt \displaystyle F^2(\tilde\Phi) & = & \displaystyle E\left[P_0|\langle \Phi| \Pi_0\Psi \rangle|^2\right] + (d^{\prime\prime}-1) E\left[P_1|\langle E_1\Phi| \Pi_1\Psi \rangle|^2\right] \\
\vrule height 20pt depth 14pt width 0pt & = & \displaystyle 1 - {\bar E}[\sin^{2d}(\theta_0)] \, 4 \dfrac{(2\pi)^{d-1}}{(2d-1)!!} \left(d-1 -(d^{\prime\prime}-1) \right) \\
\vrule height 20pt depth 14pt width 0pt & = & \displaystyle 1 - {\bar E}[\sin^{2d}(\theta_0)] \, 4 \dfrac{(2\pi)^{d-1}}{(2d-1)!!} \left(d-d^{\prime\prime}\right). \\
\end{array}
$$
\end{proof}

If the probability distribution of $\Psi$ is normal the fidelity of $\tilde\Phi$ is much simpler.

\begin{cor}
\label{Cor:FidelityTildePsiND}
If $\Psi$ has a normal probability distribution with parameter $\sigma_c$ the fidelity of $\tilde\Phi$ satisfies:
\begin{equation}
\label{For:FidelityTildePsiND}
F^2(\tilde\Phi) = \dfrac{1+(d^\prime -1)\sigma_c^2}{d^\prime}.
\end{equation}
\end{cor}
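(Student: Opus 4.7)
The plan is to combine Theorem \ref{Thm:FidelityTildePhi} with the normal-distribution moment evaluation already carried out in the proof of Corollary \ref{Cor:FidelityPsiND}, and then simplify using the dimensional identity $d = d^{\,\prime} d^{\,\prime\prime}$ that a non-degenerate $(n,m)$ code satisfies.

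First I would start from the general formula
$$
F^2(\tilde\Phi) = 1 - 4 \, \dfrac{(2\pi)^{d-1}}{(2d-1)!!} \, (d-d^{\,\prime\prime}) \, {\bar E}\left[\sin^{2d}(\theta_0)\right]
$$
given by Theorem \ref{Thm:FidelityTildePhi}. The proof of Corollary \ref{Cor:FidelityPsiND} evaluates, for the normal density $f_n(\sigma_c,\theta_0)$, the quantity
$$
4 \, \dfrac{(2\pi)^{d-1}}{(2d-1)!!} \, \bar E[\sin^{2d}(\theta_0)] \;=\; \dfrac{1-\sigma_c^2}{d},
$$
(this is just Corollary \ref{Cor:FidelityPsiND} divided by the factor $d-1$). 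Substituting this into the formula above immediately gives
$$
F^2(\tilde\Phi) \;=\; 1 - \dfrac{d-d^{\,\prime\prime}}{d}\,(1-\sigma_c^2).
$$

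Next I would invoke the structural relation $d = d^{\,\prime} d^{\,\prime\prime}$ coming from the $(n,m)$-code $\mathcal{C}$: the $d^{\,\prime\prime}$ correctable discrete errors send the $d^{\,\prime}$-dimensional code subspace to $d^{\,\prime\prime}$ mutually orthogonal subspaces that tile $\mathbb{C}^d$. Hence $d^{\,\prime\prime} = d/d^{\,\prime}$, so $(d-d^{\,\prime\prime})/d = (d^{\,\prime}-1)/d^{\,\prime}$. Plugging this in yields
$$
F^2(\tilde\Phi) \;=\; 1 - \dfrac{d^{\,\prime}-1}{d^{\,\prime}}\,(1-\sigma_c^2) \;=\; \dfrac{1 + (d^{\,\prime}-1)\sigma_c^2}{d^{\,\prime}},
$$
which is exactly the claimed identity.

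The only non-routine step is the identity $d^{\,\prime\prime}=d/d^{\,\prime}$; everything else is a one-line substitution. This identity is the standard dimension-counting statement for a non-degenerate (in particular, perfect) quantum code, and it is tacitly built into the syndrome decomposition used in the proof of Theorem \ref{Thm:FidelityTildePhi} (where each correctable error produces its own orthogonal $d^{\,\prime}$-dimensional syndrome subspace). If the paper intends to cover degenerate codes as well, I would need to argue that the same counting still holds after identifying syndrome classes, but in the setting of this article it is a direct consequence of the parameters introduced at the beginning of Section 2.
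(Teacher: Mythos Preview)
Your proof is correct and follows essentially the same route as the paper: substitute the normal-distribution value of $\bar E[\sin^{2d}(\theta_0)]$ (which you pull from the proof of Corollary~\ref{Cor:FidelityPsiND}, the paper pulls directly from the Appendix) into Theorem~\ref{Thm:FidelityTildePhi}, then simplify using $d=d^{\,\prime}d^{\,\prime\prime}$. Your extra discussion justifying $d=d^{\,\prime}d^{\,\prime\prime}$ is more than the paper provides, but it is consistent with the setup of Section~2 and the syndrome decomposition underlying Theorem~\ref{Thm:FidelityTildePhi}.
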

\begin{proof}
To prove the result, it is enough to substitute in Theorem \ref{Thm:FidelityTildePhi} the value of the integral ${\bar E}[\sin^{2d}(\theta_0)]$ from the Appendix and consider that $d = d^\prime d^{\prime\prime}$.
\end{proof}

To compare the fidelities of $\Psi_0$ and $\tilde\Phi$ we need to obtain $F^2(\tilde\Phi)$ as a
function of the variances $v_c$ of the state $\Psi$.

\begin{thm}
\label{Thm:UpperBoundFidelityTildePsi}
If the state $\Psi$ has an isotropic distribution with density function $f(\theta_0)$ such that:
\begin{equation}
\label{For:Condition_f}
\int_0^\pi (1-\cos(\theta_0))\cos(\theta_0)f(\theta_0)\geq 0,
\end{equation}
the fidelity of $\tilde\Phi$ satisfies:
\begin{equation}
\label{For:UpperBoundFidelityTildePsi}
F^2(\tilde\Phi) \leq 1 - \dfrac{d-d^{\prime\prime}}{2d^\prime-1} \, v_c.
\end{equation}
\end{thm}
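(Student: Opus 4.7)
The plan is to follow the structure of the proof of Theorem~\ref{Thm:LowerBoundFidelityPsi_0}: first rewrite the explicit formula for $F^2(\tilde\Phi)$ given by Theorem~\ref{Thm:FidelityTildePhi} as a clean expression in the spherical expectation $E[\sin^2\theta_0]$, and then use hypothesis~(\ref{For:Condition_f}) to bound that expectation from below by $v_c/2$.

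For Step 1 I would start from
$$F^2(\tilde\Phi) = 1 - 4\,\dfrac{(2\pi)^{d-1}}{(2d-1)!!}\,(d-d'')\,\bar E[\sin^{2d}(\theta_0)]$$
and use the spherical factorization
$$E[\sin^2(\theta_0)] = |S^{2d-3}|\,\bar E[\sin^{2d}(\theta_0)]\int_0^\pi \sin^{2d-3}(\theta_1)\,d\theta_1,$$
which is the same identification used in the proof of Theorem~\ref{Thm:LowerBoundFidelityPsi_0} (with $d'$ replaced by $d$), to eliminate the 1D integral $\bar E[\sin^{2d}(\theta_0)]$ in favour of the honest spherical expectation $E[\sin^2(\theta_0)]$. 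Invoking the Appendix identities $|S^{2d-3}| = (2\pi)^{d-1}/(2d-4)!!$ and $\int_0^\pi \sin^{2d-3}(\theta_1)\,d\theta_1 = 2(2d-4)!!/(2d-3)!!$, together with the telescoping $(2d-1)!! = (2d-1)(2d-3)!!$, the constants collapse and Theorem~\ref{Thm:FidelityTildePhi} takes the compact form
$$F^2(\tilde\Phi) = 1 - \dfrac{2(d-d'')}{2d-1}\,E[\sin^2(\theta_0)].$$

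For Step 2 I would expand the integrand in~(\ref{For:Condition_f}) as $(1-\cos(\theta_0))\cos(\theta_0) = \cos(\theta_0) - \cos^2(\theta_0)$, so that the hypothesis, interpreted with respect to the ambient spherical measure under which $E$ is computed, asserts $E[\cos(\theta_0)] \geq E[\cos^2(\theta_0)]$. Combining this with $E[\cos(\theta_0)] = 1 - v_c/2$ (from Formula~(\ref{For:DefVariance})) and $E[\cos^2(\theta_0)] = 1 - E[\sin^2(\theta_0)]$ rearranges to $E[\sin^2(\theta_0)] \geq v_c/2$. Substituting this lower bound into the Step 1 identity immediately produces inequality~(\ref{For:UpperBoundFidelityTildePsi}).

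I expect the main obstacle to be the bookkeeping in Step 1: the cancellations of double factorials have to be handled carefully so that the factor $d-d''$ survives unspoiled in the numerator while the denominator telescopes via $(2d-1)!! = (2d-1)(2d-3)!!$. The structure is completely parallel to the proofs of Theorem~\ref{Thm:FidelityPsi} and Theorem~\ref{Thm:LowerBoundFidelityPsi_0}, so no new analytic input is needed; Step 2 is then one line of algebra, the only subtle point being to read~(\ref{For:Condition_f}) consistently against the spherical measure. A quick sanity check against the normal-distribution case of Corollary~\ref{Cor:FidelityTildePsiND} would confirm the final constants.
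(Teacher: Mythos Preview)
Your approach is essentially identical to the paper's: start from Theorem~\ref{Thm:FidelityTildePhi}, collapse the spherical constants to rewrite $F^2(\tilde\Phi)=1-\dfrac{2(d-d'')}{2d-1}\,E[\sin^2\theta_0]$ (the paper does this via $\bar E[\sin^{2d}\theta_0]\,|S^{2d-2}|=E[\sin^2\theta_0]$, which is the same identity you obtain after recombining $|S^{2d-3}|$ with the $\theta_1$ integral), and then use hypothesis~(\ref{For:Condition_f}) in the equivalent form $E[(1-\cos\theta_0)\cos\theta_0]\ge 0$ to get $E[\sin^2\theta_0]\ge v_c/2$. One point worth flagging, since you single out the bookkeeping as the main obstacle: both your Step~1 and the paper's computation yield the denominator $2d-1$, whereas the displayed bound~(\ref{For:UpperBoundFidelityTildePsi}) carries $2d'-1$; as $d\ge d'$, the latter is strictly stronger and does not follow from the former, so the sanity check you propose against Corollary~\ref{Cor:FidelityTildePsiND} would in fact detect this discrepancy rather than confirm the stated constant.
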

\begin{proof}
First we prove, similar to the proofs of Theorems \ref{Thm:FidelityPsi} and \ref{Thm:LowerBoundFidelityPsi_0}, the following:
$$
\begin{array}{lll}
\vrule height 14pt depth 12pt width 0pt F^2(\tilde\Phi) & = & \displaystyle 1 - 4 \, \dfrac{(2\pi)^{d-1}}{(2d-1)!!} \, (d-d^{\prime\prime}) \, {\bar E}\left[\sin^{2d}(\theta_0)\right] \\
\vrule height 20pt depth 14pt width 0pt                 & = & \displaystyle 1 - 4 \, \dfrac{(2\pi)^{d-1}}{(2d-1)!!} \, \dfrac{d-d^{\prime\prime}}{|S^{2d-2}|} \, {\bar E}\left[\sin^{2d}(\theta_0)\right] \, |S^{2d-2}| \\
\vrule height 20pt depth 14pt width 0pt                 & = & \displaystyle 1 - 4 \, \dfrac{(2\pi)^{d-1}}{(2d-1)!!} \, \dfrac{(2d-3)!!}{2(2\pi)^{d-1}} \,  (d-d^{\prime\prime}) \, E\left[\sin^2(\theta_0)\right] \\
\vrule height 20pt depth 14pt width 0pt                 & = & \displaystyle 1 - 2 \, \dfrac{d-d^{\prime\prime}}{2d-1} \, E\left[\sin^2(\theta_0)\right]. \\
\end{array}
$$

Now, using Formula (\ref{For:Condition_f}), we obtain the following lower bound:
$$
\begin{array}{lll}
\vrule height 12pt depth 8pt width 0pt E\left[\sin^2(\theta_0)\right] & =    & \displaystyle E\left[(1-\cos(\theta_0))(1+\cos(\theta_0))\right] \\
\vrule height 18pt depth 12pt width 0pt                               & \geq & \displaystyle E\left[1-\cos(\theta_0)\right] = \dfrac{v_c}{2}. \\
\end{array}
$$

The proof is concluded by introducing the previous lower bound in the expression obtained for $F^2(\tilde\Phi)$.
\end{proof}

\section{Relationship between the fidelity of the states $\Psi_0$, $\tilde\Phi$ and $\Psi$}

The results obtained in the previous section allow us to easily prove the following theorem.

\begin{thm}
\label{Thm:RelationshipTildePhiPsi}
If the state $\Psi$ has an isotropic distribution the following relationship between the fidelities of $\tilde\Phi$ and $\Psi$ holds:
\begin{equation}
\label{For:RelationshipTildePhiPsi}
F^2(\tilde\Phi) \geq F^2(\Psi).
\end{equation}
\end{thm}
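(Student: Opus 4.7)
The plan is to read off the result by direct comparison of the closed-form expressions supplied by Theorem \ref{Thm:FidelityPsi} and Theorem \ref{Thm:FidelityTildePhi}. Both expressions share the same structure, namely $1$ minus a non-negative prefactor times $\bar E[\sin^{2d}(\theta_0)]$, and differ only in the integer multiplier: $(d-1)$ in the case of $F^2(\Psi)$ and $(d-d^{\prime\prime})$ in the case of $F^2(\tilde\Phi)$. So the entire argument reduces to comparing those two integers.

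First, I would subtract the two formulas to obtain
$$
F^2(\tilde\Phi) - F^2(\Psi) \;=\; 4\,\dfrac{(2\pi)^{d-1}}{(2d-1)!!}\,\bigl((d-1)-(d-d^{\prime\prime})\bigr)\,\bar E[\sin^{2d}(\theta_0)] \;=\; 4\,\dfrac{(2\pi)^{d-1}}{(2d-1)!!}\,(d^{\prime\prime}-1)\,\bar E[\sin^{2d}(\theta_0)].
$$
Next, I would justify that every factor on the right-hand side is non-negative. The combinatorial prefactor $(2\pi)^{d-1}/(2d-1)!!$ is obviously positive. The integer $d^{\prime\prime}-1$ is non-negative because any $(n,m)$-quantum code corrects at least the trivial error $E_0=I$, so by definition $d^{\prime\prime}\geq 1$. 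Finally,
$$
\bar E[\sin^{2d}(\theta_0)] \;=\; \int_0^\pi f(\theta_0)\sin^{2d}(\theta_0)\,d\theta_0 \;\geq\; 0,
$$
since $f$ is a density function and $\sin^{2d}(\theta_0)\geq 0$ on $[0,\pi]$. Combining these observations gives $F^2(\tilde\Phi)\geq F^2(\Psi)$, which is the desired inequality.

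There is essentially no hard step: the whole content of the theorem has been absorbed into the earlier theorems that compute $F^2(\Psi)$ and $F^2(\tilde\Phi)$ explicitly. The only subtlety worth mentioning in the writeup is the observation that $d^{\prime\prime}\geq 1$ by the defining convention of the code parameter (and that $d\geq d^{\prime\prime}$, so that the formula for $F^2(\tilde\Phi)$ also yields a meaningful fidelity between $0$ and $1$). Beyond that, the proof is a one-line cancellation plus a sign check, which is exactly why the authors remark that it follows easily from the previous section.
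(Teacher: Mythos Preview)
Your proof is correct and matches the paper's own argument essentially verbatim: the paper simply invokes Theorems~\ref{Thm:FidelityPsi} and~\ref{Thm:FidelityTildePhi} and the inequality $d-1\geq d-d^{\prime\prime}$, which is exactly your observation that $d^{\prime\prime}\geq 1$. Your explicit sign check on each factor is a slight elaboration but changes nothing of substance.
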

\begin{proof}
Theorems \ref{Thm:FidelityPsi} and \ref{Thm:FidelityTildePhi} allow us to prove the result directly, considering that $d-1 \geq d-d^{\prime\prime}$.
\end{proof}

To compare the fidelities of states $\Psi_0$ and $\tilde\Phi$ we need to use Theorems \ref{Thm:LowerBoundFidelityPsi_0} and \ref{Thm:UpperBoundFidelityTildePsi}. However, we must establish a previous result in order to establish the relationship between these states.

\begin{lem}
\label{Lem:InequalityFidelities}
Given $n\in\N$, $n\geq 2$, and $x\in\R$, $0\leq x\leq 4$, the following is satisfied:
$$
g(n,x)=2-2\left(1-\dfrac{x}{2}\right)^n-\left(x-\left(\dfrac{x}{2}\right)^2\right) \geq 0.
$$
\end{lem}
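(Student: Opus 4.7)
The plan is to reduce $g(n,x)$ to a form in which the inequality becomes transparent, via the substitution $u = 1 - x/2$. Since $x$ ranges over $[0,4]$, the new variable $u$ ranges over $[-1,1]$. A short algebraic simplification gives
$$
g(n,x) = 2 - 2u^n - 2(1-u) + (1-u)^2 = 1 + u^2 - 2u^n,
$$
so the claim becomes $1 + u^2 - 2u^n \geq 0$ for all $u \in [-1,1]$ and all integers $n \geq 2$.

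The key observation to make next is that $u^n \leq u^2$ throughout this range, which I would justify by splitting into three sub-cases. If $u \geq 0$, then $u^n = u^2 \cdot u^{n-2} \leq u^2$ since $0 \leq u \leq 1$ and $n - 2 \geq 0$. If $u < 0$ and $n$ is even, then $u^n = (u^2)^{n/2} \leq u^2$ because $u^2 \in [0,1]$ and $n/2 \geq 1$. If $u < 0$ and $n$ is odd, then $u^n < 0 \leq u^2$ trivially. In all three cases $u^n \leq u^2$.

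With this bound in hand, the proof concludes in one line: $1 + u^2 - 2u^n \geq 1 + u^2 - 2u^2 = 1 - u^2 \geq 0$, since $|u| \leq 1$. Reverting to the original variable $x$ via $u = 1 - x/2$ yields the claimed inequality.

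I do not anticipate a serious obstacle here. The only mildly delicate step is the case distinction for $u^n \leq u^2$ when $u$ is negative, but this is routine once the substitution is made. The whole content of the lemma lies in recognizing the algebraic identity $2 - 2(1 - x/2)^n - (x - x^2/4) = 1 - 2(1-x/2)^n + (1-x/2)^2$, after which the inequality is a direct consequence of $|u| \leq 1$ and $n \geq 2$.
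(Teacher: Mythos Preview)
Your proof is correct and follows essentially the same approach as the paper: the paper makes the identical substitution $y=1-x/2$, obtains $g(n,y)=1+y^2-2y^n$ on $y\in[-1,1]$, and invokes $y^2\ge|y^n|$ to conclude. Your version is slightly more detailed in justifying $u^n\le u^2$ via case analysis, but the argument is the same.
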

\begin{proof}
The change of variable $\displaystyle y=\left(1-\dfrac{x}{2}\right)$ allows us to better analyze the function:
$$
g(n,y)=1+y^2-2y^n \quad \text{and} \quad x\in[0,4] \ \Leftrightarrow \ y\in[-1,1].
$$

Property $1,\ y^2\geq |y^n|$ for all $y\in[-1,1]$ allows us to conclude that $g(n,y)\geq 0$ for all $y\in[-1,1]$ and this shows that:
$$
g(n,x)\geq 0 \quad \text{for all} \quad x\in[0,4].
$$
\end{proof}

The previous lemma allows us to obtain the main result of this article.

\begin{thm}
\label{Thm:RelationshipPsi_0TildePhi}
If states $\Psi_0$ and $\Psi$ have isotropic distributions with variances $v_u$ and $v_c$ respectively and the density function of $\Psi$ satisfies Formula (\ref{For:Condition_f}), the following relationship between the fidelities of $\Psi_0$ and $\tilde\Phi$ holds:
\begin{equation}
\label{For:RelationshipTildePhiPsi}
F^2(\Psi_0) \geq F^2(\tilde\Phi).
\end{equation}
\end{thm}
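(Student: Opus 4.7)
The plan is to chain the lower bound for $F^2(\Psi_0)$ from Theorem \ref{Thm:LowerBoundFidelityPsi_0} with the upper bound for $F^2(\tilde\Phi)$ from Theorem \ref{Thm:UpperBoundFidelityTildePsi}, using Lemma \ref{Lem:InequalityFidelities} as the bridge that transports the coded variance $v_c$ back to the uncoded variance $v_u$.

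First I would invoke Theorem \ref{Thm:UpperBoundFidelityTildePsi} (whose hypothesis (\ref{For:Condition_f}) on $f$ is part of the current assumptions) to write
\[
F^2(\tilde\Phi) \;\leq\; 1 - \frac{d-d''}{2d'-1}\, v_c .
\]
A purely algebraic step follows: since $d = d' d''$, we have $d - d'' = d''(d'-1)$, and any genuine quantum code satisfies $d'' \geq 2$, so $d - d'' \geq 2(d'-1) = 2d'-2$. Feeding this back yields
\[
F^2(\tilde\Phi) \;\leq\; 1 - \frac{2d'-2}{2d'-1}\, v_c .
\]

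Next I would bring in Formula (\ref{For:RelationshipVariances}), $v_c = 2 - 2(1-v_u/2)^n$, and apply Lemma \ref{Lem:InequalityFidelities} at $x = v_u$ to obtain $v_c \geq v_u - (v_u/2)^2$. Because the coefficient $(2d'-2)/(2d'-1)$ is nonnegative, this gives
\[
F^2(\tilde\Phi) \;\leq\; 1 - \frac{2d'-2}{2d'-1}\!\left(v_u - \Bigl(\frac{v_u}{2}\Bigr)^{\!2}\right) .
\]
The right-hand side is precisely the lower bound for $F^2(\Psi_0)$ furnished by Theorem \ref{Thm:LowerBoundFidelityPsi_0}, so transitivity delivers $F^2(\tilde\Phi) \leq F^2(\Psi_0)$, which is the claim of the theorem.

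The main obstacle I anticipate is the coefficient-matching step: the upper bound on $F^2(\tilde\Phi)$ carries the prefactor $(d-d'')/(2d'-1)$ while the matching lower bound on $F^2(\Psi_0)$ carries $(2d'-2)/(2d'-1)$, so the two sandwich inequalities only compose in the correct direction after exploiting $d - d'' = d''(d'-1) \geq 2d'-2$, which itself relies on the nontriviality assumption $d'' \geq 2$ (the degenerate case $d'' = 1$ corresponds to no actual encoding and must be handled by inspection, where it is immediate since then $\tilde\Phi$ and $\Psi_0$ coincide in distribution). A secondary subtlety worth flagging is that Lemma \ref{Lem:InequalityFidelities} is used in exactly the form that discards the explicit $n$-dependence of $v_c$; this is why the lemma is stated as $2-2(1-x/2)^n \geq x-(x/2)^2$ rather than something tighter, and it is precisely what makes the two sides of the sandwich comparable once the prefactors have been aligned.
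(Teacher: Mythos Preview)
Your proof is correct and follows essentially the same route as the paper's: both combine Theorem~\ref{Thm:LowerBoundFidelityPsi_0} and Theorem~\ref{Thm:UpperBoundFidelityTildePsi}, use $d=d'd''$ and $d''\geq 2$ to align the prefactors, and invoke Lemma~\ref{Lem:InequalityFidelities} via Formula~(\ref{For:RelationshipVariances}) to compare $v_c$ with $v_u-(v_u/2)^2$. The only cosmetic difference is that the paper manipulates the subtracted terms directly (reducing to $v_c \geq \frac{2}{d''}(v_u-(v_u/2)^2)$ and then to $v_c \geq v_u-(v_u/2)^2$), whereas you carry the full $F^2$ bounds through the chain; the logical content is identical.
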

\begin{proof}
Theorems \ref{Thm:LowerBoundFidelityPsi_0} and \ref{Thm:UpperBoundFidelityTildePsi} allow us to prove the result, just establishing that the following inequality holds:
$$
\dfrac{d-d^{\prime\prime}}{2d^\prime-1} \, v_c \geq \dfrac{2d^\prime-2}{2d^\prime-1}\left(v_u-\left(\dfrac{v_u}{2}\right)^2\right).
$$

Taking into account that $d=d^\prime d^{\prime\prime}$ the above inequality is equivalent to the following:
$$
v_c \geq \dfrac{2}{d^{\prime\prime}}\left(v_u-\left(\dfrac{v_u}{2}\right)^2\right).
$$

Using the fact that $d^{\prime\prime}\geq 2$ would suffice to prove the first of the following two inequalities:
$$
v_c \geq v_u-\left(\dfrac{v_u}{2}\right)^2 \geq \dfrac{2}{d^{\prime\prime}}\left(v_u-\left(\dfrac{v_u}{2}\right)^2\right).
$$

Substituting the value of $v_c$ given in Formula (\ref{For:RelationshipVariances}) and using the function $g(n,x)$ of Lemma \ref{Lem:InequalityFidelities} we have:
$$
v_c \geq v_u-\left(\dfrac{v_u}{2}\right)^2 \quad \Leftrightarrow \quad g(n,v_u)\geq 0.
$$

Finally, Lemma \ref{Lem:InequalityFidelities} allows us to conclude the proof, using the fact that the variance $v_u\in[0,4]$.
\end{proof}

If the isotropic distributions of $\Psi$ and $\Psi_0$ are normal the condition given in Formula (\ref{For:Condition_f}) for Theorems \ref{Thm:UpperBoundFidelityTildePsi} and \ref{Thm:RelationshipPsi_0TildePhi} is not necessary. Indeed, Corollaries \ref{Cor:FidelityPsiND}, \ref{Cor:FidelityPsi_0} and \ref{Cor:FidelityTildePsiND} clearly imply that:
\begin{equation}
\label{For:GlobalResult}
F(\Psi_0) \geq F(\tilde\Phi) \geq F(\Psi).
\end{equation}

On the other hand, the condition given by Formula (\ref{For:Condition_f}) for Theorems \ref{Thm:UpperBoundFidelityTildePsi} and \ref{Thm:RelationshipPsi_0TildePhi} is a sufficient condition. However, it is not necessary because it has been obtained by underestimating the fidelity of $\Psi_0$ and overestimating that of $\tilde\Phi$. It is verified for very general isotropic distributions, such as for density functions $f(\theta_0)$ that satisfy the following:
$$
f(\theta_0)=0 \quad \text{for all} \quad \theta_0\in\left(\dfrac{\pi}{2},\pi\right].
$$

\begin{figure}[h]
\label{Fig:Curves_F}
\begin{center}
        \includegraphics[scale=0.22]{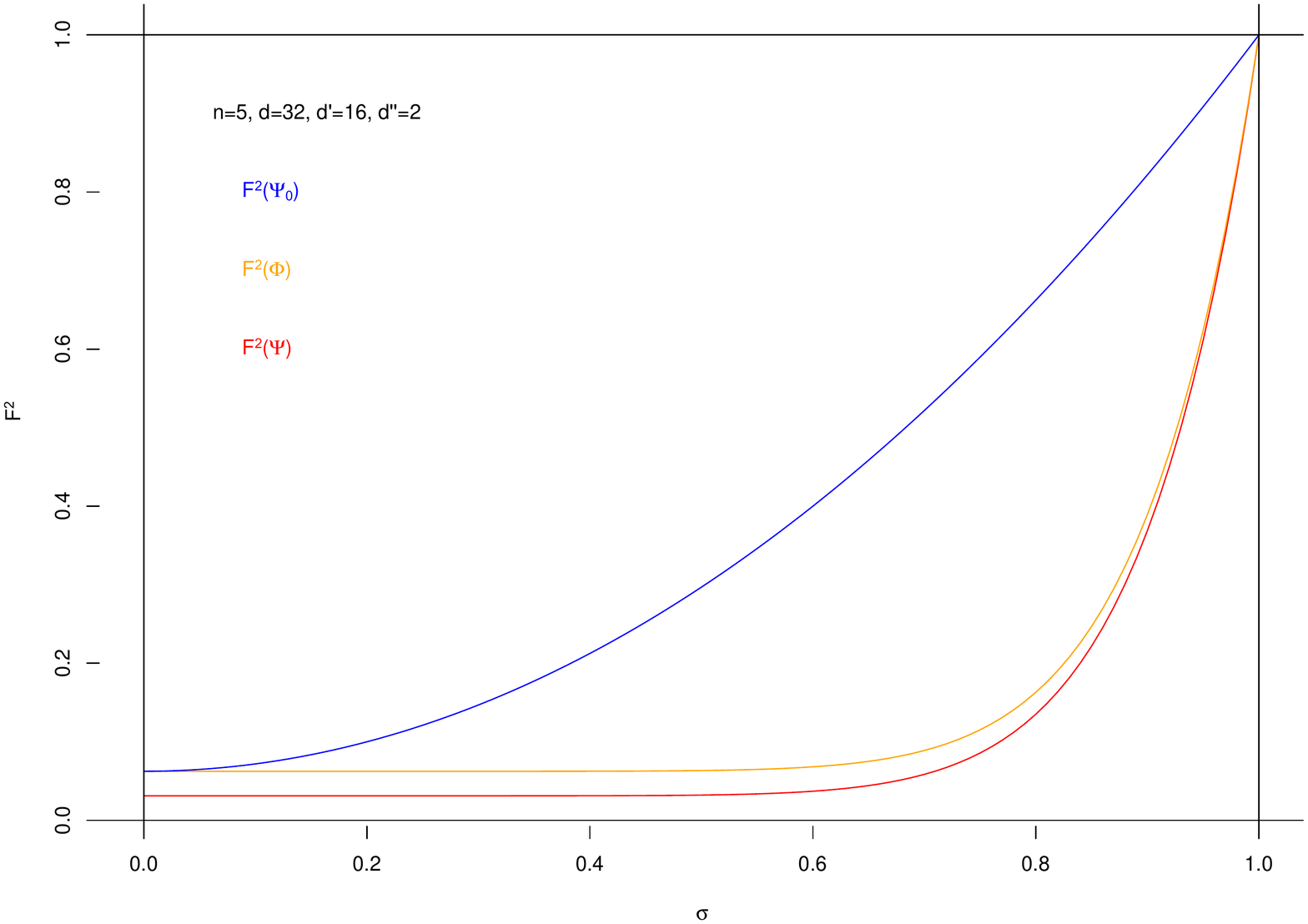} \includegraphics[scale=0.22]{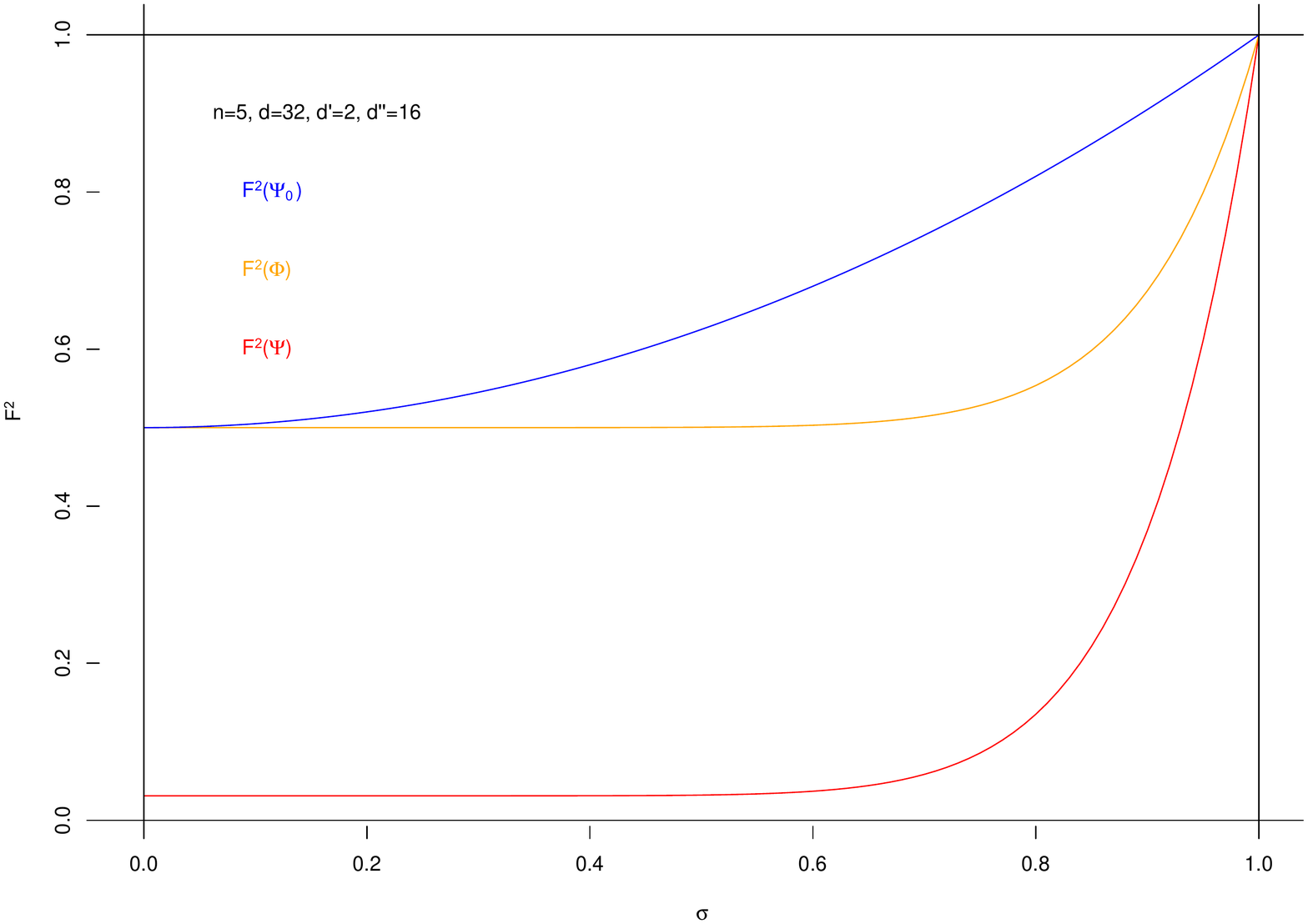}
        \caption{\centerline{Representation of fidelities as a function of $\sigma$.}}
\end{center}
\end{figure}

Figure \ref{Fig:Curves_F} shows the curves of $F^2(\Psi_0)$, $F^2(\tilde\Phi)$ and $F^2(\Psi)$ for normal isotropic distributions and $n=5$ ($d=32$), in the extreme cases $d^\prime=16$ ($d^{\prime\prime}=2$) and $d^\prime=2$ ($d^{\prime\prime}=16$).

The conclusion of the study carried out in this article, in view of the results summarized in Formula (\ref{For:GlobalResult}), is that the best option to obtain the highest fidelity against isotropic errors is not to use quantum codes. On the other hand, the improvement of the fidelity of $\tilde\Phi$ versus that of $\Psi$ seems to be closely related to the dimension of the subspaces to which these states belong: $d^\prime$ for $\tilde\Phi$ versus $d$ for $\Psi$. See Theorems \ref{Thm:FidelityPsi} and \ref{Thm:FidelityTildePhi} and Corollaries \ref{Cor:FidelityPsiND} and \ref{Cor:FidelityTildePsiND}.

\section{Conclusions}

In this article we have analyzed the ability of quantum codes to increase fidelity
of quantum states affected by isotropic decoherence errors. The results obtained,
despite being those expected for this type of quantum errors, are not good from
the point of view of controlling errors in quantum computing. The ability of
quantum codes to reduce errors does not compensate the multiplication of the
number of gates that they require. This fact implies that the best option against
isotropic errors is not to use quantum codes. This result is similar to that
obtained in~\cite{LPF}: quantum codes do not reduce the variance of isotropic errors;
and in~\cite{LPFM}: the $5-$qubit quantum code do not reduce the variance of qubit
independent errors. The last result is more worrying since it negatively affects
the standard model of error in quantum computing. For this reason, it would
be important to study the behavior of fidelity in this case.

These results indicate that continuous errors must be taken into account,
since it is not possible to ensure that the golden rule of error control ``correct all
small errors exactly" is fulfilled. Therefore, the study of the stochastic model
of quantum errors, focused on discrete errors, must be extended to continuous
errors.

For future research, we believe that the continuous quantum computing error
model should be further developed. The results on the ability of quantum codes
to increase the fidelity or to reduce the variance of quantum errors should be
extended to other types of error. It is also important to develop models of the
behavior of quantum errors in highly entangled quantum systems. We need
to know better the behavior of errors in this type of systems so important for
quantum computing. Finally, all these approaches should allow a reformulation
of fault-tolerant quantum computing for continuous errors.

\section{Appendix}

The values of the integrals that have been used throughout the article are included
in this Appendix.

\bigskip

$
\displaystyle\int_{0}^{\pi}\sin^{k}(\theta)d\theta=
\left\{\begin{array}{l}
\displaystyle 2\,\frac{(k-1)!!}{k!!}\quad\mbox{}\quad k=1,\,3,\,5,\,\,\dots \\ \\
\displaystyle \pi\,\frac{(k-1)!!}{k!!}\quad\mbox{}\quad k=2,\,4,\,6,\,\,\dots
\end{array}\right.
$

\bigskip

$
\displaystyle\int_{0}^{\pi}\frac{\sin^{2d-2}(\theta_0)}{(1+\sigma^2-2\sigma\cos(\theta_0))^d}d\theta_0=
\frac{(2d-3)!!}{(2d-2)!!}\frac{\pi}{(1-\sigma^2)}\quad\mbox{}\quad d=1,\,2,\,3,\,\,\dots
$

\bigskip

$
\displaystyle\int_{0}^{\pi}\frac{\cos(\theta_0)\sin^{2d-2}(\theta_0)}{(1+\sigma^2-2\sigma\cos(\theta_0))^d}d\theta_0=
\frac{(2d-3)!!}{(2d-2)!!}\frac{\sigma}{(1-\sigma^2)}\pi\quad\mbox{}\quad d=1,\,2,\,3,\,\,\dots
$

\bigskip

$
\displaystyle\int_{0}^{\pi}\frac{\sin^{2d}(\theta_0)}{(1+\sigma^2-2\sigma\cos(\theta_0))^d}d\theta_0= \frac{(2d-1)!!}{(2d)!!}\pi\quad\mbox{}\quad d=0,\,1,\,2,\,\,\dots
$

\bigskip

Starting from the first integral, the surface of a unit sphere of arbitrary even
$(2d)$ or odd $(2d-1)$ dimension can be calculated.

\bigskip

$
\begin{array}{ccl}
|{\cal S}_{2d}| & = & \displaystyle \int_0^{\pi}\cdots\int_0^{\pi}\int_0^{2\pi}\sin^{2d-1}(\theta_0)\,\cdots\,\sin^{1}(\theta_{2d-2})\ d\theta_0\,\cdots\,d\theta_{2d-2}d\theta_{2d-1} \\ \\
 & = & \displaystyle 2\frac{(2d-2)!!}{(2d-1)!!}\ \frac{(2d-3)!!}{(2d-2)!!}\pi\ 2\frac{(2d-4)!!}{(2d-3)!!}\ \cdots\ \frac{(2-1)!!}{2!!}\pi\ 2\frac{(1-1)!!}{1!!}\ 2\pi \\ \\
 & = & \displaystyle \frac{2(2\pi)^d}{(2d-1)!!} \\
\end{array}
$

\bigskip

$
\begin{array}{ccl}
|{\cal S}_{2d-1}| & = & \displaystyle \int_0^{\pi}\cdots\int_0^{\pi}\int_0^{2\pi}\sin^{2d-2}(\theta_0)\,\cdots\,\sin^{1}(\theta_{2d-3})\ d\theta_0\,\cdots\,d\theta_{2d-3}d\theta_{2d-2} \\ \\
 & = & \displaystyle \frac{(2d-3)!!}{(2d-2)!!}\pi\ 2\frac{(2d-4)!!}{(2d-3)!!}\ \cdots\ \frac{(2-1)!!}{2!!}\pi\ 2\frac{(1-1)!!}{1!!}\ 2\pi \\ \\
 & = & \displaystyle \frac{(2\pi)^d}{(2d-2)!!} \\
\end{array}
$

\end{document}